\documentclass[a4paper,11pt, reqno]{amsart}
\usepackage[latin1]{inputenc}
\usepackage{mathrsfs}
\usepackage{amsmath}
\usepackage{amssymb}
\usepackage{amsthm}
\usepackage{amsfonts}
\usepackage{amstext}
\usepackage{amsopn}
\usepackage{amsxtra}
\usepackage{mathrsfs}
\newtheorem{theorem}{Theorem}
\newtheorem{proposition}[theorem]{Proposition}

\newtheorem{lemma}[theorem]{Lemma}
\newtheorem{corollary}[theorem]{Corollary}

\newtheorem{definition}{Definition}
\newtheorem{remark}{Remark}

\newcommand{\ii}{\infty}
\newcommand\R{{\ensuremath {\mathbb R} }}
\newcommand\C{{\ensuremath {\mathbb C} }}
\newcommand\N{{\ensuremath {\mathbb N} }}

\newcommand{\gto}{\underset{\rm g}{\rightharpoonup}}
\renewcommand\phi{\varphi}

\newcommand{\gH}{\mathfrak{H}}
\newcommand{\gS}{\mathfrak{S}}

\newcommand{\wto}{\rightharpoonup}

\newcommand{\cF}{\mathcal{F}}

\newcommand{\cFN}{\cF^{\leq N}}
\newcommand{\eps}{\epsilon}

\renewcommand{\epsilon}{\varepsilon}

\newcommand{\norm}[1]{ \left| \! \left| #1 \right| \! \right| }
\DeclareMathOperator{\tr}{{\rm Tr}}
\DeclareMathOperator{\Tr}{{\rm Tr}}
\DeclareMathOperator{\glim}{g-lim}

\renewcommand{\geq}{\geqslant}
\renewcommand{\leq}{\leqslant}
\renewcommand{\hat}{\widehat}
\renewcommand{\tilde}{\widetilde}


\title{A many-body RAGE theorem}

\author{Jonas Lampart}
\email{lampart@ceremade.dauphine.fr}
\address{CNRS and CEREMADE (UMR CNRS 7534), University of Paris-Dauphine, Place de Lattre de Tassigny, 75775 Paris Cedex 16, France}

\author{Mathieu Lewin}
\email{mathieu.lewin@math.cnrs.fr}
\address{CNRS and CEREMADE (UMR CNRS 7534), University of Paris-Dauphine, Place de Lattre de Tassigny, 75775 Paris Cedex 16, France}

\date{July 20, 2015. Final version to appear in \emph{Comm. Math. Phys.}}

\begin{document}

\begin{abstract}
We prove a generalized version of the RAGE theorem for $N$-body quantum systems. The result states that only bound states of systems with $0\leq n\leq N$ particles persist in the long time average. The limit is formulated by means of an appropriate weak topology for many-body systems, which was introduced by the second author in a previous work, and is based on reduced density matrices. This topology is connected to the weak-$\ast$ topology of states on the algebras of canonical commutation or anti-commutation relations, and we give a formulation of our main result in this setting.
\end{abstract}

\maketitle


\section{Introduction and main result}

The RAGE theorem, due to Ruelle~\cite{Ruelle-69}, Amrein-Georgescu~\cite{AmrGeo-73} and Enss~\cite{Enss-78} is a famous result relating the long time behavior of solutions to the Schr{\"o}\-dinger equation and the spectral properties of the corresponding self-adjoint Hamiltonian. In particular, it states that for any fixed $x$ in the ambient Hilbert space $\gH$,
\begin{equation}
 \lim_{T\to\ii}\frac{1}{T}\int_0^T \norm{KP_c\,e^{-itH}x}^2\,dt=0\,,
 \label{eq:usual_RAGE}
\end{equation}
where $P_c$ is the spectral projector to the continuous spectral subspace of $H$, $K$ is any compact operator and $x(t)=e^{-itH}x$ is the unique (weak) solution to Schr\"odinger's equation
$$\begin{cases}
i\,\dot x(t)=Hx(t),\\
x(0)=x.
  \end{cases}
$$
An equivalent way of formulating the same result is as follows: for every positive self-adjoint operator $\gamma$ in the trace class $\gS^1(\gH)$, consider the ergodic mean 
$$M(T):=\frac{1}{T}\int_0^T e^{-itH}\gamma e^{itH}\,dt$$
which is uniformly bounded in $\gS^1(\gH)$.
Then, the projection $P_cM(T)P_c$ converges weakly-$\ast$ to 0 as $T\to\ii$.
 Here, the operator $\gamma(t)=e^{-itH}\gamma e^{itH}$ is the unique (weak) solution to von Neumann's formulation of Schr\"odinger's equation
\begin{equation}\label{eq:Heisenberg}
\begin{cases}
i\,\dot \gamma(t)=[H,\gamma(t)],\\
\gamma(0)=\gamma.
\end{cases}
\end{equation}
The previous formulation~\eqref{eq:usual_RAGE} corresponds to $\gamma=|x\rangle\langle x|$.
 Put differently, any weakly-$\ast$ convergent subsequence of $M(T)$ has a limit $M_\ii$ which is supported on the space spanned by the eigenvectors of $H$. It can also be proved that $M_\ii$ commutes with $H$, that is, 
$$M_\ii = \sum_j \alpha_j |\phi_j\rangle\langle\phi_j|$$
where $\phi_j$ is an orthonormal system of eigenvectors of $H$.

The RAGE theorem is a very important result in quantum mechanics. For an infinite-dimensional Hamiltonian system such as Schr\"odinger's equation, strong convergence to stationary states (that is, eigenvectors of $H$) cannot hold in general, due to the conservation laws. The RAGE theorem states that, on the contrary, \emph{weak} convergence towards stationary states holds. Hence, in this sense, only bound states persist in the long time average. By virtue of its generality, self-adjointness of $H$ being the only hypothesis, the RAGE theorem is a fundamental tool in the spectral theory of self-adjoint operators. For instance, it may serve as a first step towards a more precise scattering theory of quantum systems~\cite{ReeSim3} and it is also often used in the study of Anderson localization~\cite{Graf-94,Hundertmark-00}.

However, the information it provides for an interacting many-body system is often not very precise. Consider for instance three electrons in the field of a proton, described by the three-body Hamiltonian
\begin{equation}
 H_3=\sum_{j=1}^3\left(-\Delta_{x_j}-\frac{1}{|x_j|}\right)+\sum_{1\leq j<\ell\leq 3}\frac{1}{|x_j-x_\ell|}.
\end{equation}
It is known that $H_3$ has no bound state \cite{Hill-80,Lieb-84}. The RAGE theorem therefore tells us that the ergodic mean $M(T)$ tends weakly-$\ast$ to zero for every initial condition $\gamma$ (a trace-class operator on $L^2(\R^9)$). On the other hand, the corresponding Hamiltonian $H_2$ for two electrons has finitely many bound states~\cite{Hill-77,Hill-77b,GroPit-83} and, of course, the hydrogen atom $H_1$ for one electron has infinitely many.  The physical picture is that some of the three particles escape, whereas the rest remain in a bound state of $H_2$ or $H_1$, a phenomenon that is not captured at all by the RAGE theorem. The precise description of this process through scattering theory has been the object of several works~\cite{Derezinski-93,SigSof-94,HunSig-00b}. It was proved that, asymptotically, the time-evolved wavefunction can be approximated by a sum of products of the form $e^{-it\lambda}\psi_t\otimes\phi$ where $\phi$ is a $\lambda$-eigenfunction of $H_1$ or $H_2$ and $\psi_t\wto0$. A 
tensor product of this form tends weakly to zero in $L^2(\R^9)$, which is why the weak-$\ast$ limit of $M(T)$ is always zero. 

In this paper, we would like to prove a new general version of the RAGE theorem that retains some information on the particles that do not escape, without addressing all the details of the scattering process. This is done by using another weak topology, for which the tensor product $\psi_t\otimes\phi$ converges to $\phi$, and which was introduced by the second author in~\cite{Lewin-11}. In this topology, the ergodic mean will converge to states with possibly less particles, which are supported in the point spectrum of the Hamiltonians $H_n$ for $n\leq N$. This captures the principal physical ideas, even in situations where scattering is not known or not believed to hold, for instance for potentials with an arbitrarily slow decay at infinity.

\subsection*{Geometric convergence}
In order to state our main result, let us quickly describe the notion of convergence used in~\cite{Lewin-11}, where all the details may be found. Let $\gH$ be any separable Hilbert space and denote by $\gH^N$ the symmetric (or antisymmetric) $N$-fold tensor product, respectively denoted by $\otimes_{s/a}$, of $\gH$. Let $\Gamma_k$ be a sequence of $N$-particle states, that is, $\Gamma_k\geq0$ and $\tr(\Gamma_k)=1$. For instance, for a pure state $\Gamma_k=|\Psi_k\rangle\langle\Psi_k|$ for a normalized $\Psi_k\in\gH^N$. The \emph{$n$-particle density matrix} of $\Gamma_k$ is obtained by taking the partial trace with respect to $N-n$ variables and multiplying by an appropriate normalization constant:
$$\Gamma_k^{(n)}={N\choose n}\tr_{n+1,...,N}(\Gamma_k).$$
The sequence $\Gamma_k$ is said to \emph{converge geometrically} to a state $\Gamma$ if the reduced density matrices of $\Gamma_k$ all converge weakly-$\ast$ to those of $\Gamma$. Except if convergence holds in trace-norm, the state $\Gamma$ can never be an $N$-particle state. It is necessary to work with states on the truncated Fock space
$$\cF^{\leq N}(\gH):=\C\oplus \gH\oplus\cdots\oplus \gH^N$$
and this corresponds to the picture that some particles can be lost. 
For simplicity, all the states we consider in this paper are assumed to commute with the particle number, the theory for the general case is essentially the same. Such states can be written in block form as $\Gamma=G_0\oplus\cdots \oplus G_N$, and if we start with a sequence of $N$-body states and investigate its geometric limits, these are the only states that can be obtained.

We rephrase the previous discussion in the following:

\begin{definition}[Geometric convergence]
A sequence of states $\{\Gamma_k\}_{k=1}^\ii$ on $\gH^N$ \emph{converges geometrically} to a state $\Gamma=G_0\oplus\cdots \oplus G_N$ on $\cFN(\gH)$, if $$\Gamma_k^{(n)}\wto_\ast\Gamma^{(n)}=G_n+\sum_{m=n+1}^N{m\choose n}\tr_{n+1,...,m}(G_m)$$ 
weakly-$\ast$ in the trace-class $\gS^1(\gH^n)$ for all $n=0,...,N$. That is, $\tr(K\Gamma_k^{(n)})\to\tr(K\Gamma^{(n)})$ for every compact operator $K$ on $\gH^n$. This notion of convergence is denoted as $\Gamma_k\wto_{\rm g}\Gamma$ and extended by linearity to sequences of states on the truncated Fock space $\cFN(\gH)$.
\end{definition}

We emphasize that, by definition, the geometric limit $\Gamma$ must always be a state. That is, it has to satisfy $\Gamma\geq0$ and 
$$\tr_{\cFN}(\Gamma)=G_0+\sum_{n=1}^N\tr_{\gH^n}(G_n)=1.$$
If convergence does not hold in $\gS_1(\gH^N)$, then the final state has to live over spaces with less particles, but its trace is always equal to one. If all the particles are lost, then $\Gamma$ is the vacuum state $\Gamma=1\oplus0\oplus\cdots\oplus0$.
It is proved in~\cite[Lemma 3]{Lewin-11} that every sequence of states $\Gamma_k$ has a geometrically convergent subsequence (the limit being a state).

\subsection*{Main result}
We are now able to state our main result. We consider an abstract  many-body Hamiltonian of the form
\begin{equation}
H_n=\sum_{j=1}^nh_j+\sum_{1\leq j<\ell\leq n}w_{j\ell}
\label{eq:def_H_N}
\end{equation}
acting on the $n$-particle space $\gH^n$ (for us symmetric or antisymmetric). Here $h$ is a given self-adjoint operator acting on the one-particle space $\gH$ and $h_j$ acts on the $j$th factor. On the other hand, $w$ is a self-adjoint operator on the two particle space $\gH^2$ and $w_{j\ell}$ acts on the $j$th and $\ell$th factors. We make rather general assumptions on $w$ in order to give a proper meaning to $H_n$ for all $n$. As will be clear from the rest of the paper, these can be weakened in specific examples or if one is only interested in a particular $n=N$. We assume that
\begin{equation}
 \text{$h$ is bounded from below (without loss of generality $h>1$)}
 \label{eq:hyp_h}
\end{equation}
and that 
\begin{equation}
 \text{$|w|$ is infinitesimally form bounded with respect to $h_1+h_2$}
 \label{eq:hyp_w1}
\end{equation}
which means that 
\begin{equation} \label{eq:w_rel_form_bd}
\epsilon(h_1+h_2)-C_\epsilon \leq w \leq \epsilon(h_1+h_2)+C_\epsilon
\end{equation}
for all $\epsilon>0$. Under these assumptions, $H_n$ is bounded from below and may be realized as a self-adjoint operator for all $n$, by the method of Friedrichs.

\begin{theorem}[Many-body RAGE]\label{thm:mbRAGE}
In addition to~\eqref{eq:hyp_h} and~\eqref{eq:hyp_w1}, assume that for every compact operator $K$ on $\gH$
\begin{equation}
(K_1+K_2) (h_1+h_2)^{-1/2}\,w\, (h_1+h_2)^{-1/2}\\
\label{eq:hyp_w2}
\end{equation}
is compact on $\gH^2$, where $K_1=K\otimes 1_\gH$ and $K_2=1_\gH\otimes K$. Let $\Gamma\geq0$ with $\tr(\Gamma)=1$ be a state on the (symmetric or antisymmetric) $N$-particle space $\gH^N$. Then, the ergodic mean
$$\frac{1}{T}\int_0^T e^{-itH_N}\Gamma\, e^{it H_N}\,dt$$
has geometrically convergent subsequences as $T\to \infty$, and, for every such sequence, the limit 
is a convex combination of projections to eigenspaces of the $n$-body Hamiltonians $H_n$ for $0\leq n\leq N$.
\end{theorem}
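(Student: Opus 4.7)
My plan is to combine the geometric compactness from \cite{Lewin-11} with a continuity property of the unitary evolution under the geometric topology, and then to apply the standard RAGE theorem at each particle number $n\leq N$. First, by \cite[Lemma~3]{Lewin-11} the sequence of ergodic means $M(T)$ admits geometrically convergent subsequences, so I fix $T_k\to\infty$ with $M(T_k)\wto_{\rm g}\Gamma_\infty=G_0\oplus\cdots\oplus G_N$ on $\cFN(\gH)$.

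The crucial technical step is to establish a continuity of the unitary evolution $\alpha_s(\cdot):=e^{-isH_N}(\cdot)e^{isH_N}$ under geometric limits: if $\Gamma_k\wto_{\rm g} G_0\oplus\cdots\oplus G_N$, then for every $s\in\R$,
$$\alpha_s(\Gamma_k)\;\wto_{\rm g}\;\bigoplus_{n=0}^N e^{-isH_n}G_n\,e^{isH_n}=:\beta_s(\Gamma_\infty).$$
The content is that under geometric convergence the $N$-body dynamics decouples into independent $n$-body dynamics on each block of $\cFN(\gH)$, reflecting the physical picture that particles which have escaped to infinity no longer interact with those remaining bound. To prove it I would test against a compact $K$ on $\gH^n$ and analyze the Heisenberg-evolved operator $e^{isH_N}(K\otimes 1_{\gH^{N-n}})e^{-isH_N}$ via Duhamel. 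Splitting $H_N=H_n\otimes 1+1\otimes H_{N-n}'+W_{\mathrm{cross}}$ with $W_{\mathrm{cross}}=\sum_{j\leq n<\ell}w_{j\ell}$, the contribution of $W_{\mathrm{cross}}$ is controlled by hypothesis \eqref{eq:hyp_w2}, which supplies the needed compactness of the $h$-dressed cross terms, while \eqref{eq:hyp_w1} provides the uniform energy bounds that close the estimates. The combinatorial bookkeeping is organized by the defining identity $\Gamma^{(n)}=G_n+\sum_{m>n}\binom{m}{n}\tr_{n+1,\ldots,m}(G_m)$, which redistributes the weakly-$\ast$ converging reduced density matrices $\Gamma_k^{(m)}\wto_\ast\Gamma^{(m)}$ for $m\geq n$ across the blocks of the right-hand side.

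Granted this continuity, the argument concludes quickly. A direct computation gives $\|\alpha_s(M(T))-M(T)\|_{\gS^1(\gH^N)}=O(|s|/T)$ in trace norm, so passing to the geometric limit along $T_k$ forces $\beta_s(\Gamma_\infty)=\Gamma_\infty$ for every $s\in\R$, i.e.\ $[H_n,G_n]=0$ for each $n$. Finally I would apply the standard RAGE theorem directly to $G_n\in\gS^1(\gH^n)$: since $G_n$ is already $H_n$-invariant, its own ergodic mean is identically $G_n$, and the usual RAGE theorem then forces $P_c(H_n)\,G_n\,P_c(H_n)=0$. Combined with the commutation, this yields $G_n=\sum_j\alpha_j^{(n)}|\phi_j^{(n)}\rangle\langle\phi_j^{(n)}|$ with $H_n\phi_j^{(n)}=\lambda_j^{(n)}\phi_j^{(n)}$, the claimed convex combination of projections onto eigenspaces. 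The main obstacle is the continuity step: making rigorous the vanishing of the $W_{\mathrm{cross}}$-contribution in the Duhamel expansion, uniformly over the sequence of states, is the technical heart of the proof and is exactly where hypothesis \eqref{eq:hyp_w2} must be put fully to work.
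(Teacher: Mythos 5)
Your overall strategy coincides with the paper's: extract a geometrically convergent subsequence via \cite[Lemma 3]{Lewin-11}, show the limit is invariant under the decoupled dynamics $\bigoplus_{n}e^{-isH_n}(\cdot)e^{isH_n}$ by combining the $O(|s|/T_k)$ trace-norm identity with a continuity-of-the-flow statement, and conclude that each block $G_n$ is a convex combination of eigenprojections. (Your final detour through the classical RAGE theorem is correct but unnecessary: once $e^{-itH_n}G_ne^{itH_n}=G_n$ for all $t$, every nonzero eigenspace of the trace-class operator $G_n$ is a finite-dimensional $H_n$-invariant subspace, hence a direct sum of eigenspaces of $H_n$.)

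The genuine gap is in the continuity step, and it is not merely technical. As you state it --- $\Gamma_k\gto G_0\oplus\cdots\oplus G_N$ implies $e^{-isH_N}\Gamma_k e^{isH_N}\gto\bigoplus_n e^{-isH_n}G_n e^{isH_n}$ for \emph{arbitrary} $N$-body states $\Gamma_k$ --- the claim is false in general: the map $\gamma\mapsto e^{-is\mathbb{H}}\gamma e^{is\mathbb{H}}$ is not continuous for the geometric topology on $\gS_1(\cFN)$ (the paper stresses this in its closing Remark). It becomes true only on sets with a uniform energy bound $\Tr\big(\sqrt{H_N^0}\,\gamma_k\sqrt{H_N^0}\big)\le E$, and hypothesis \eqref{eq:hyp_w1} does not supply such a bound: it gives form-boundedness of $w$ relative to $h_1+h_2$, not any estimate on the states, and the theorem allows initial data $\Gamma$ of infinite energy, so $M(T_k)$ need satisfy no energy estimate at all. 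To close this you need the two reduction ingredients of the paper's Step 2: (i) approximation of $\Gamma$ by finite-energy states, justified by the quantitative lemma that a geometric limit of a sequence of trace norm $<\eps$ has trace norm $\le c\,\eps$, so the approximation error survives the limit $T_k\to\infty$; and (ii) the fact that the energy bound passes to geometric limits (Lemma~\ref{lem:energynorm}), so the limit state itself lies in the energy space where the evolved hierarchy makes sense. Moreover, even with energy bounds, the Duhamel term involving $W_{\mathrm{cross}}$ is not defined as written when $w$ is only form-bounded (a commutator with $w$ under a partial trace has no meaning); one must pass to the dressed form $\big[h_{n+1}^{-1/2}w_{j,n+1}h_{n+1}^{-1/2},\,h_{n+1}^{1/2}\gamma^{(n+1)}h_{n+1}^{1/2}\big]$ --- or, in your dual picture, enlarge the observable class from $K\otimes 1$ with $K$ compact to operators $A$ with $(H_n^0)^{-1/2}A(H_n^0)^{-1/2}$ compact --- and one needs a uniqueness statement for the resulting truncated hierarchy to identify the limit of the evolved sequence with the evolved limit; this is precisely the content of Proposition~\ref{thm:BBGKY}. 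Your use of \eqref{eq:hyp_w2} for the cross terms is the right idea (after a further resolvent/power-series manipulation to convert $(h_jh_{m+1})^{-1/2}$ into $(h_j+h_{m+1})^{-1/2}$), but without the energy reduction and the well-posedness of the dressed hierarchy the scheme does not yet constitute a proof.
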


If we consider a sequence of times $T_k\to\ii$ for which
$$\frac{1}{T_k}\int_0^{T_k} e^{-itH_N}\Gamma\, e^{it H_N}\,dt\gto M_\infty=G_0\oplus\cdots \oplus G_N,$$
the result states that 
\begin{equation}
G_n=\sum_j\alpha_{n,j}|\phi_{n,j}\rangle\langle\phi_{n,j}| 
 \label{eq:convex_combination}
\end{equation}
where $\{\phi_{n,j}\}_{j\geq1}$ is an orthonormal system of eigenvectors of the $n$-body operator $H_n$. By definition, geometric convergence means that the density matrices converge
$$\frac{1}{T_k}\int_0^{T_k} \Big(e^{-itH_N}\Gamma\, e^{it H_N}\Big)^{(n)}\,dt\underset{\ast}{\wto} M_\infty^{(n)}=\sum_{m=n}^N{m\choose n}\tr_{n+1,...,m}G_m$$
weakly-$\ast$ in the trace-class for all $n=0,...,N$.
Let us emphasize that, because of the partial traces, the $M_\infty^{(n)}$ are in general not supported on the point spectrum of $H_n$. The density matrix $M_\infty^{(n)}$ should not be confused with the restriction $G_n$ of the state $M_\infty$ to the $n$-particle space.

The condition~\eqref{eq:hyp_w2} on $w$ ensures that the interaction between a particle that stays and a particle that escapes vanishes in a weak sense. A condition of this type is clearly necessary to be sure that the remaining $n$ particles are described by the Hamiltonian $H_n$ when the other $N-n$ escape.
The operator $B:=(h_1+h_2)^{-1/2}w(h_1+h_2)^{-1/2}$ is always bounded by Assumption~\eqref{eq:hyp_w1}. However, in the applications, it is usually not compact. The picture is that $w$ only decays in one direction, like the relative coordinate of the two particles, and must be multiplied by a compact operator $K$ in another variable, corresponding to the position of only one of the particles, as in~\eqref{eq:hyp_w2} to make $(K\otimes 1) B$  and $(1\otimes K) B$ compact.

Our theorem is stated for any initial datum $\Gamma$, possibly with an infinite energy. However, our proof does use the conservation of energy for smooth initial data. It is an interesting open problem to derive a similar result when $h$ is not bounded from below. We will make more comments on this below.

We also remark that the exact same theorem holds if the ergodic mean is replaced by 
$$\frac1T\int_\R \chi(t/T)\, e^{-iH_Nt}\Gamma e^{iH_Nt}\,dt$$
where $\chi$ is any nonnegative function such that $\int_\R \chi(t)\,dt=1$.

\subsection*{Reformulation in terms of the CAR and CCR algebras}
Theorem~\ref{thm:mbRAGE} has a natural extension to the Fock space and the associated algebra of canonical (anti-) commutation relations (the latter is to be understood in the sense of Weyl operators, see~\cite{DerGer-13} for a detailed introduction). This extension is based on the observation (see~\cite[Remark 6]{Lewin-11}) that for the rank-one operator
\begin{equation}\label{eq:CRpoly}
 K=\vert f_1\otimes_{s/a}\cdots \otimes_{s/a} f_n\rangle \langle g_1\otimes_{s/a} \cdots \otimes_{s/a} g_n\vert\,,
\end{equation}
geometric convergence of $\Gamma_k\in \gS_1(\cFN)$ just means that
\begin{align}
 \Tr\left( K \Gamma_k^{(n)} \right)&= \left\langle g_1\otimes_{s/a}\cdots \otimes_{s/a} g_n, \Gamma_k^{(n)} f_1\otimes_{s/a}\cdots\otimes_{s/a} f_n \right\rangle\nonumber\\
%
&=\Tr_\cF \left(a^*(f_1)\cdots a^*(f_n) a(g_1)\cdots a(g_n)\Gamma_k\right)\label{eq:geomCR}
\end{align}
converges. Here, $a^*(f), a(f)$ denote, respectively, bosonic or fermionic creation and annihilation operators, depending on whether $\otimes_{s/a}$ is the symmetric or antisymmetric tensor product.
We thus have:
\begin{corollary}
Let $\mathcal{A}\in \lbrace \rm{CCR}(\gH), \rm{CAR}(\gH)\rbrace$ be the $C^*$-subalgebra of the bounded operators on $\cF(\gH)$ satisfying canonical commutation relations, if $\cF(\gH)$ is the symmetric Fock-space, or anti-commutation relations, if $\cF(\gH)$ is the anti-symmetric Fock-space. Denote by $\Pi_N$ the projection of $\cF(\gH)$ to $\gH^N$ and let $\rho\in \mathcal{A}'$ be a normal state on $\mathcal{A}$, that is $\rho(A)=\Tr(\Gamma_\rho A)$ for some $\Gamma_\rho\in \gS_1(\cF)$, satisfying $[\Gamma_\rho,\Pi_N]=0$ for all $N\in \N$.
Define the Hamiltonian on the Fock space $\cF(\gH)$ by
$$\mathbb{H}=0\oplus h\oplus\bigoplus_{N\geq2}H_N.$$
Then the ergodic mean
\begin{equation*}
 \mu_T(A):=\frac1T \int_0^T\rho\left(e^{i \mathbb{H} t} A e^{-i \mathbb{H} t}\right)\,dt\,.
\end{equation*}
has weakly-$\ast$ convergent subsequences as $T\to \infty$ and, for every such sequence, the limit state is normal.
For the density operator $\Gamma_\infty$ of this state, $\Pi_N \Gamma_\infty \Pi_N$ is a convex
combination of projections to eigenspaces of $H_N$ for all $N\geq1$.
\end{corollary}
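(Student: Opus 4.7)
The strategy is to apply Theorem~\ref{thm:mbRAGE} separately in each particle-number sector of $\Gamma_\rho$, assemble the sectorial limits into a single Fock space density operator, and then upgrade the resulting sectorwise geometric convergence to weak-$*$ convergence on the whole algebra $\mathcal A$.

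Since $\mathbb{H}$ preserves particle number and $\Gamma_\rho$ commutes with every $\Pi_N$, the ergodic mean $\Gamma_T:=\frac{1}{T}\int_0^T e^{-i\mathbb{H}t}\Gamma_\rho e^{i\mathbb{H}t}\,dt$ is block-diagonal, $\Gamma_T=\bigoplus_N\Gamma_{T,N}$, where $\Gamma_{T,N}$ is the ergodic mean of the $N$-body dynamics generated by $H_N$ acting on $\Gamma_{\rho,N}:=\Pi_N\Gamma_\rho\Pi_N$. Set $\lambda_N:=\tr\Gamma_{\rho,N}$, so $\sum_N\lambda_N=1$. For each $N$ with $\lambda_N>0$, Theorem~\ref{thm:mbRAGE} applied to the $N$-particle state $\lambda_N^{-1}\Gamma_{\rho,N}$ yields a subsequence along which $\lambda_N^{-1}\Gamma_{T,N}\gto \tilde M_N=\bigoplus_{n=0}^N \tilde G_{N,n}$, each $\tilde G_{N,n}$ being a convex combination of projections to eigenspaces of $H_n$. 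A diagonal extraction produces a single sequence $T_k\to\infty$ that works for all $N$ at once.

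Define the candidate Fock space limit
$$\Gamma_\infty:=\sum_{N\geq 0}\lambda_N\,\tilde M_N,$$
where each $\tilde M_N$ is viewed as an element of $\gS^1(\cF(\gH))$ by extension with zero outside $\cFN(\gH)$. The series converges absolutely in trace norm, gives a block-diagonal density operator with $\tr\Gamma_\infty=1$, and yields $\Pi_n\Gamma_\infty\Pi_n=\sum_{N\geq n}\lambda_N\,\tilde G_{N,n}$, which remains a convex combination of projections to eigenspaces of $H_n$. It remains to prove $\tr(\Gamma_{T_k}A)\to\tr(\Gamma_\infty A)$ for every $A\in\mathcal A$. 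Normality of $\rho$ provides the uniform tail bound
$$\big\|\Gamma_T-\Pi_{\leq M}\Gamma_T\Pi_{\leq M}\big\|_{\gS^1}+\big\|\Gamma_\infty-\Pi_{\leq M}\Gamma_\infty\Pi_{\leq M}\big\|_{\gS^1}\leq 2\sum_{N>M}\lambda_N,$$
which tends to $0$ as $M\to\infty$ uniformly in $T$. This reduces the problem to proving, for each fixed $M$ and each $N\leq M$, that $\tr(\Gamma_{T_k,N}\Pi_N A\Pi_N)\to\lambda_N\tr(\tilde M_N A)$.

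When $A$ is a polynomial in the $a^*(f)$ and $a(g)$ this is exactly the identity~\eqref{eq:geomCR}, since only number-preserving monomials in $A$ contribute to $\Pi_N A\Pi_N$. The final step is a density argument, which is the main technical point. In the CAR case polynomials in $\{a(f),a^*(f)\}$ are norm-dense in ${\rm CAR}(\gH)$, and the uniform bound $|\mu_T(A)|\leq\|A\|$ immediately extends convergence to all of $\mathcal A$. In the CCR case such polynomials are unbounded and do not lie in $\mathcal A$; here one tests on Weyl operators $W(f)$ and approximates the sector restriction $\Pi_N W(f)\Pi_N$ in operator norm by truncations of the power series
$$\Pi_N W(f)\Pi_N=\sum_{j\geq 0}\frac{i^{2j}}{(2j)!}\,\Pi_N\phi(f)^{2j}\Pi_N,$$
which converges because $\phi(f)$ acts boundedly between neighbouring particle sectors. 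Since the outer sum over $N\leq M$ is finite, this transfers convergence from polynomials to Weyl operators, and the norm-density of their linear span in ${\rm CCR}(\gH)$ concludes the proof.
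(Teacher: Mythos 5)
Your proposal is correct and takes essentially the same route as the paper: particle-number conservation makes the ergodic mean block-diagonal, a tail bound uniform in $T$ reduces everything to finitely many sectors, and on each sector the geometric convergence furnished by Theorem~\ref{thm:mbRAGE} is upgraded to convergence against $\mathcal{A}$ via~\eqref{eq:geomCR}, using norm-density of polynomials in the CAR case and the sectorwise norm-convergent power series of Weyl operators in the CCR case --- precisely the ingredients the paper invokes through \cite[Remark 6]{Lewin-11}. The only cosmetic differences are your diagonal extraction over all sectors in place of the paper's $\epsilon/2$ truncation of $\Gamma_\rho$ to $\cF^{\leq N}$, and the fact that identifying an \emph{arbitrary} weak-$\ast$ convergent subsequence's limit with the constructed normal state requires a routine further sub-subsequence argument, which you leave implicit at the same level of detail as the paper itself.
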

\begin{proof}
The state $\mu_T$ is normal for all $T>0$, with density matrix 
$$\frac1T \int_0^Te^{-i \mathbb{H} t}\Gamma_\rho\, e^{i \mathbb{H} t}\,dt=\bigoplus_{N\geq0}\frac1T \int_0^Te^{-i H_Nt}\Pi_N\Gamma_\rho\Pi_N\, e^{i H_Nt}\,dt.$$
This sequence of normal states satisfies
$$\tr_{\cF(\gH)}\left\{\bigoplus_{N\geq N_0}\frac1T \int_0^Te^{-i H_Nt}\Pi_N\Gamma_\rho\Pi_N\, e^{i H_Nt}\,dt\right\}=\sum_{N\geq N_0}\tr_{\gH^N}(\Pi_N\Gamma_\rho\Pi_N)$$
which is small, uniformly in $T$, when $N_0$ is large. By an `$\epsilon/2$ argument', it is thus sufficient to prove the corollary for $\Gamma_\rho\in \gS_1(\cFN)$.
For states on the truncated Fock space $\cFN$, the weak-$\ast$ convergence in $\mathcal{A}'$ is equivalent to geometric convergence~\cite[Remark~6]{Lewin-11}. In the fermionic case this is an immediate consequence of the formula~\eqref{eq:geomCR} and the commutation relations. In the bosonic case one uses additionally that on $\gH^N$ the Weyl-operators may be expressed as convergent power series in creation and annihilation operators (see~\cite[chapter 9]{DerGer-13}). Therefore, the existence of weak-$\ast$ convergent subsequences as well as the properties of their limits follow from Theorem~\ref{thm:mbRAGE}, by linearity.
\end{proof}
Note that $\mathcal{A}=\mathrm{CCR}(\gH)$ is not separable, hence bounded sequences do
not necessarily have weakly-$\ast$ convergent subsequences. However, we are
able to prove that our particular sequence $\mu_T$ has convergent
subsequences, using the fact that it is obtained from a fixed normal state by a  particle-number conserving evolution.
\subsection*{Application to Schr\"odinger operators}
A typical example to which our result applies is that of a non-relativistic system of $N$ fermions or bosons, for which the interaction $w$ is a function of the relative position of the particles, that tends to zero at infinity in a weak sense.
\begin{corollary}[Non-relativistic Schr\"odinger operators]\label{cor:Schrodinger}
Let $\gH=L^2(\R^d)$, $h=-\Delta+V(x)+e$ and $w$ be the multiplication operator by an even function $w(x_1-x_2)$. We assume that   $V=f_1+f_2$ and that $w=f_3+f_4$ where $f_i\in L^{p_i}(\R^d)$ for some $\max(1,d/2)<p_i<\ii$, or $f_i\in L^\ii(\R^d)$ and $f_i\to0$ at infinity. Then Theorem~\ref{thm:mbRAGE} holds for the many-body Hamiltonian
\begin{equation*}
 H_N=\sum_{j=1}^N-\Delta_{x_j}+V(x_j)+\sum_{1\leq j<\ell\leq N} w(x_j-x_\ell)\,.
\end{equation*}
\end{corollary}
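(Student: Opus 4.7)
The plan is to verify the three hypotheses~\eqref{eq:hyp_h}, \eqref{eq:hyp_w1}, and \eqref{eq:hyp_w2} of Theorem~\ref{thm:mbRAGE}. The first two are routine consequences of standard KLMN-type arguments: under the hypothesis on each $f_i$ (either $f_i \in L^{p_i}(\R^d)$ with $p_i > \max(1,d/2)$, or $f_i \in L^\infty(\R^d)$ vanishing at infinity), $f_i$ is infinitesimally form-bounded with respect to $-\Delta$ on $L^2(\R^d)$, by H\"older's inequality combined with the Sobolev embedding $H^1(\R^d) \hookrightarrow L^{2d/(d-2)}$ (adapted in dimensions $d \le 2$). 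Hence $V$ is infinitesimally form-bounded w.r.t.\ $-\Delta$, so $h$ is self-adjoint by KLMN and bounded below; enlarging $e$ if necessary gives $h > 1$. Since $w(x_1-x_2)$ depends only on $y=x_1-x_2$, the same estimate performed in $y$ together with Fubini in the center-of-mass direction yields infinitesimal form-boundedness of $|w|$ with respect to $-\Delta_{x_1}-\Delta_{x_2}$ on $L^2(\R^{2d})$, and hence with respect to $h_1+h_2$.

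For the compactness condition \eqref{eq:hyp_w2}, the first step is to factor $B := (h_1+h_2)^{-1/2}w(h_1+h_2)^{-1/2} = A^*\operatorname{sign}(w)A$ with $A := |w|^{1/2}(h_1+h_2)^{-1/2}$ bounded by \eqref{eq:hyp_w1}; compactness of $(K_1+K_2)B$ then follows from compactness of $K_j A^*$ (equivalently $AK_j^*$) for $j=1,2$, and by symmetry I treat only $j=1$. The set of compact $K$ on $\gH$ for which $AK_1^*$ is compact is norm-closed in $\mathcal{K}(\gH)$, so it suffices to check the property for rank-one $K = |u\rangle\langle v|$ with $u,v \in C_c^\infty(\R^d)$, whose linear combinations are norm-dense in $\mathcal{K}(\gH)$. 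I then split $|w|^{1/2}$ into a bounded, compactly supported function $\omega_L$ (with support in $|y|\le L$) plus a residual $r_L$ whose operator $r_L(x_1-x_2)(h_1+h_2)^{-1/2}$ has norm tending to zero as $L\to\infty$ (via the same form-boundedness estimate, $r_L$ going to zero in the ambient $L^{p_3}+L^\infty$-vanishing-at-infinity class). This reduces the problem to showing that
$$T_L := \omega_L(x_1-x_2)\,(h_1+h_2)^{-1/2}\,(|v\rangle\langle u|\otimes 1_\gH)$$
is compact on $\gH^2$. The rank-one factor sends any $\Phi$ to a function supported in $x_1 \in \operatorname{supp} v$; the operator $(h_1+h_2)^{-1/2}$ preserves this tightness in $x_1$ up to an exponentially decaying error coming from the rapid decay of its Green's kernel; and multiplication by $\omega_L(x_1-x_2)$ then localizes $|x_1-x_2|\le L$, so $x_2$ is also confined to a compact set. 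Combined with the uniform $H^1(\R^{2d})$-bound inherited from $(h_1+h_2)^{-1/2}$ and the boundedness of $\omega_L$, the Rellich-Kondrachov theorem yields precompactness of the image of $T_L$.

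The main obstacle is the compactness argument in the intermediate regime $d/2 < p_3 \le d$, where multiplication by $|w|^{1/2}(x_1-x_2)$ is not bounded from $H^1(\R^{2d})$ to $L^2(\R^{2d})$ by direct H\"older/Sobolev on the full space (the natural threshold on $\R^{2d}$ being $p > d$). The $L^\infty$-truncation step above is precisely what isolates the part of the operator that requires true $H^1$-multiplication (now by the bounded function $\omega_L$), while the singular part is absorbed into a norm-small correction via form-boundedness in the relative variable $y$ alone, where the Sobolev exponent $p_3 > d/2$ is exactly the right threshold.
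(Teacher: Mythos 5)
Your proposal is correct in outline but follows a genuinely different route from the paper. The paper checks \eqref{eq:hyp_h}--\eqref{eq:hyp_w1} by citing compactness of $(1-\Delta)^{-1/2}V(1-\Delta)^{-1/2}$ (Davies), and verifies \eqref{eq:hyp_w2} through $V$-semicompactness (Lemma~\ref{lem:semicomp}): a norm-convergent decomposition $(h_1+h_2)^{-1/2}w(h_1+h_2)^{-1/2}=\sum_j K_j\otimes_V B_j$ relative to the splitting $L^2(\R^{2d})=L^2(V)\otimes L^2(V^\perp)$ into relative and center-of-mass coordinates, obtained by approximating the Fourier multiplier $(1+2\xi^2+2\zeta^2)^{-1/2}$ by finite sums $\sum f_j(\xi)g_j(\zeta)$, together with a contour-integral resolvent identity to replace the free Laplacian by $h_1+h_2$; compactness of $(K\otimes 1)(K_j\otimes_V 1)$ is then an explicit Hilbert--Schmidt kernel computation. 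You instead factor through $A=|w|^{1/2}(h_1+h_2)^{-1/2}$, reduce by norm-density to rank-one $K$, truncate $|w|^{1/2}$ in size and in the relative variable, and conclude by spatial localization plus Rellich--Kondrachov; the reduction $(K_1+K_2)B$ compact $\Leftarrow$ $AK_1^*$ compact, the density argument, and the smallness of the residual $r_L(x_1-x_2)(h_1+h_2)^{-1/2}$ (in $L^{2p_3}+L^\infty$ for $|w|^{1/2}$, note the exponent $2p_3$, not $p_3$) are all sound, and your remark that the relative-variable Sobolev threshold $p_3>d/2$ is what saves the regime $d/2<p_3\le d$ is exactly the point the paper handles via the $\otimes_V$ structure. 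Two places in your sketch hide the real work: (i) the claim that $(h_1+h_2)^{-1/2}$ preserves $x_1$-localization ``by rapid decay of its Green's kernel'' is not free when $V$ is only an $L^{p}+L^\infty_{\to 0}$ (form-bounded) potential; you need a Combes--Thomas-type estimate at the form level (or Kato-class Green's-function bounds), and an operator-norm localization bound $\|\mathbf{1}_{|x_1|>R}(h_1+h_2)^{-1/2}(\chi_{\supp v}\otimes 1)\|\to 0$ suffices and is what you should actually prove; (ii) in the Rellich step, $\omega_L$ is merely bounded and need not preserve $H^1$, so you must first restrict the $H^1$-bounded family to the compact region (Rellich) and only then multiply by the bounded function $\omega_L$ -- a trivial reordering, but as written the argument appears to apply Rellich after the multiplication. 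What each approach buys: yours is more elementary and avoids the $\otimes_V$ tensor machinery and the contour integral, at the price of the localization estimate for a rough potential; the paper's semicompactness argument separates variables in Fourier space, needs no kernel-decay input for $h$, and reduces everything to explicit Hilbert--Schmidt computations, which also makes the passage from $-\Delta$ to $h_1+h_2$ systematic.
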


\begin{proof}
Under the assumptions of the corollary, $(1-\Delta)^{-1/2}V(x)(1-\Delta)^{-1/2}$ and $(1-\Delta)^{-1/2}w(x)(1-\Delta)^{-1/2}$ are compact on $L^2(\R^d)$~\cite[Chap~8]{Davies}, hence the hypothesis~\eqref{eq:hyp_h},~\eqref{eq:hyp_w1} are satisfied and $h>1$ for an appropriate choice of $e$.

The verification of~\eqref{eq:hyp_w2} is more involved, the intuition however is rather simple. It relies on the fact that $(h_1+h_2)^{-1/2}w (h_1+h_2)^{-1/2}$ is compact in the relative coordinate $x_1-x_2$ parametrizing the subspace $V=\lbrace (x_1,x_2)\in \R^{2d}: x_1+x_2=0 \rbrace$, i.e. its action on the space $L^2(V)$ is compact.
This, together with the fact that $K\otimes 1$ is compact in the direction $\R^d\times \lbrace 0\rbrace$ parametrized by $x_1$ and $(\R^d\times \lbrace 0\rbrace) \oplus V=\R^{2d}$, implies that the product is compact. More precisely,
the property we use is the content of the following lemma, which we prove in Appendix~\ref{app:proof_lemma_Georgescu}.
\begin{lemma}\label{lem:semicomp}
 There exist compact operators $K_j$ on $L^2(V)$ and bounded operators $B_j$ on $L^2(V^\perp)$ such that
\begin{equation}\label{eq:W semicomp}
(h_1+h_2)^{-1/2}w (h_1+h_2)^{-1/2}=\sum_{j=1}^\infty K_j \otimes_V B_j
\end{equation}
and the sum converges in the operator-norm.
\end{lemma}
The notation $\otimes_V$ emphasizes that the tensor product of operators is induced by $L^2(\R^{2d})=L^2(V)\otimes L^2(V^\perp)$.
The property~\eqref{eq:W semicomp} is known as $V$-semicompactness and implies hypothesis~\eqref{eq:hyp_w2} by~\cite[Proposition 9.2.2]{AmrBdMGeo-96}. 

The argument is as follows:
It is clearly sufficient to prove that $(K\otimes 1)(K_j\otimes_V 1)$ is compact for every $j$. Then, since Hilbert-Schmidt operators are dense in the compact operators, it suffices to show that $C:=(A\otimes 1)(B\otimes_V 1)$ is compact (and actually Hilbert-Schmidt) for operators $A\in \mathfrak{S}_2(L^2(\R^d))$, $B\in \mathfrak{S}_2(L^2(V))$. 
To prove this, let $a(x,x')$, $b(v,v')$ be the integral kernels of $A$ and $B$, respectively. The operator $C$ acts of $\psi\in L^2(\R^{d}_{x_1}\times \R^d_{x_2})$ as 
\begin{align*}
 (C\psi)(x_1,x_2)
 =\int_{\R^{2d}} a(x_1, x') b(x'-x_2, x'-y')\psi(x', y')\, dx'dy'\,.
\end{align*}
So $C$ is an integral operator, whose kernel is easily seen to be in $L^2(\R^{4d})$. 
\end{proof}

If $w$ and $V$ decay fast enough at infinity, a precise theory of scattering is available~\cite{ReeSim3,Derezinski-93,SigSof-94,HunSig-00b}.
The existence of the wave operators and their completeness implies that there exists vectors $\psi_{n,j}$ such that
\begin{equation}
\lim_{t\to+\ii} \norm{e^{-itH_N}\Psi-\sum_{n=0}^N\sum_{j} e^{-i\lambda_{n,j}t}\big( e^{it\Delta_{\R^{d(N-n)}}}\psi_{n,j}\big)\otimes\phi_{n,j}}_{L^2(\R^{dN})}=0
\label{eq:scattering}
\end{equation}
where $H_n\phi_{n,j}=\lambda_{n,j}\phi_{n,j}$. For the ergodic mean of the time-evolved projection $\vert \Psi \rangle \langle \Psi \vert$ we obtain
\begin{equation*}
\frac1T \int_0^T e^{-itH_N}|\Psi\rangle\langle \Psi|\,e^{itH_N} dt
\gto \bigoplus_{n=0}^N\left(\sum_j \alpha_{n,j} |\phi_{n,j}\rangle\langle \phi_{n,j}|\right)
\end{equation*}
geometrically, where $\alpha_{n,j}=\norm{\psi_{n,j}}^2$  gives the coefficients in~\eqref{eq:convex_combination}. Additionally, any combination of the eigenvectors $\phi_{j,n}$ of the $n$-particle Hamiltonians $H_n$ can occur in the limit, so any $G_0,...,G_N$ of the form~\eqref{eq:convex_combination} may be obtained in the geometric limit, by choosing an appropriate initial condition. We do not know if the same property holds under the more general assumptions of Theorem~\ref{thm:mbRAGE}.

Results similar to Corollary~\ref{cor:Schrodinger} hold for Schr\"odinger operators with magnetic fields, for pseudo-relativistic operators, etc. We do not state them here for shortness. 

The rest of the paper will be devoted to the proof of Theorem~\ref{thm:mbRAGE}.

\section{Proof of Theorem~\ref{thm:mbRAGE}}
Let $\Gamma$ be a state on the $N$-particle space $\mathfrak{H}^N$ and $T_k{\to}\infty$ be a sequence of times such that the ergodic means
\begin{equation*}
M(T_k):=\frac1{T}_k \int_0^{T_k}e^{-i tH_N  } \Gamma e^{i  tH_N}dt
\end{equation*}
with initial condition $\Gamma$ converge geometrically to a limit $M_\infty= \oplus_{n=0}^N G_n$. Such a sequence exists for every $\Gamma$ by~\cite[Lemma 3]{Lewin-11}.

The proof of Theorem~\ref{thm:mbRAGE} proceeds by showing that $M_\infty$ is left invariant by the Hamiltonian $\mathbb{H}:=\oplus_{n=0}^N H_n$ (with $H_0=0$ and $H_1=h$) on the truncated Fock space $\cFN=\C\oplus_{n=1}^N \gH^n$. That is, we have $M_\infty=e^{-i\mathbb{H}t}M_\infty e^{i\mathbb{H}t}$ and thus
$
 e^{-i t H_n} G_n e^{it H_n} \equiv G_n
$,
for every $t\in \R$. Consequently, the eigenspaces of $G_n$ are $H_n$-invariant. 
As $G_n$ is trace-class, the eigenspace corresponding to a non-zero eigenvalue of $G_n$ has finite dimension, and is thus a direct sum of eigenspaces of $H_n$.
\subsection*{Step 1. }
The first step to proving invariance of $M_\infty$ is to note that, for every $s\in\R$, the sequence $e^{-i sH_N  } M(T_k)e^{i  sH_N}$ also converges to $M_\infty$ geometrically. This holds because geometric convergence is controlled by the trace-norm and
 \begin{align}
  e^{-i sH_N } &M(T_k) e^{i sH_N}=
  \frac1{T_k}\int_0^{T_k} e^{-i (t+s)H_N}\Gamma e^{i (t+s)H_N} \,d t\notag\\
  &=\frac1{T_k}\int_s^{T_k+s} e^{-i tH_N}\Gamma e^{i t H_N} \,d t\notag\\
  &=M(T_k)
  + \underbrace{\frac1{T_k} \left(\int_{T_k}^{T_n+s} e^{-i t H_N}\Gamma e^{i  t H_N}d t -\int_0^{s} e^{-i t H_N}\Gamma e^{i  t H_N} d t\right)}_{\leq 2s/T_{k} \text{ in }\mathfrak{S}_1}
 \notag\\
  &\gto M_\infty\,.\label{eq:inv1}
 \end{align}
 Having established this, we would like to prove that also 
\begin{equation}\label{eq:conv time evolv}
 e^{-i sH_N } M(T_k) e^{i sH_N}
 \gto e^{-i s\mathbb{H} } M_\infty e^{i s\mathbb{H}}\,.
\end{equation}
As the left hand side depends only on $H_N$, which equals the restriction of $\mathbb{H}$ to $\gH^N$, and the right may depend on $\mathbb{H}$ on all the sectors with $n\leq N$ particles, this will certainly not be true for arbitrary Hamiltonians on Fock space. In fact, the proof of~\eqref{eq:conv time evolv} will depend crucially both on the properties of $\mathbb{H}$ and the sequence $M(T_k)$.
\subsection*{Step 2. }
Using the particular form of the sequence $M(T_k)$ we can reduce the proof of~\eqref{eq:conv time evolv} to sequences of bounded energy. That is, let $e_N$ be such that $H_N+e_N>1$ and 
assume the initial condition $M(0)=\Gamma$ satisfies
\begin{equation*}
 \Tr\left((H_N+e_N)^{1/2}\Gamma (H_N+e_N)^{1/2}\right)
 <E
\end{equation*}
for some constant $E>0$.
Then, since $e^{-itH_N}$ commutes with $H_N$ and preserves the trace-norm, we have 
\begin{equation}\label{eq:unif en bound}
 \Tr\left((H_N+e_N)^{1/2}M(T_k) (H_N+e_N)^{1/2}\right)<E\,,
\end{equation}
for every $k\in \N$.
Since any initial condition $M(0)=\Gamma$ can be approximated to arbitrary precision by states of finite energy, it is sufficient to prove~\eqref{eq:conv time evolv} for sequences satisfying~\eqref{eq:unif en bound}, by virtue of the following lemma.
\begin{lemma}
 There exists a constant $c$, depending only on $N$, such that for every sequence $\lbrace \gamma_k\rbrace_{k=1}^\infty$ in $\gS_1(\gH^N)$ that converges geometrically to $\gamma \in \gS_1(\cFN)$ and satisfies $\norm{\gamma_k}_{\gS_1(\gH^N)}< \eps$ we have
 $\norm{\gamma}_{\gS_1(\cFN)}\leq c\eps$.
\end{lemma}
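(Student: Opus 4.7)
The strategy is to bound each block $G_n$ of the geometric limit $\gamma = G_0 \oplus \cdots \oplus G_N$ separately using weak-$\ast$ lower semicontinuity of the trace norm, and then sum. The key algebraic observation is that the relation between the blocks $\{G_m\}_{m\geq n}$ and the reduced density matrix $\gamma^{(n)}$ is upper triangular with identity on the diagonal, and hence invertible with combinatorial coefficients that depend only on $N$.

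First I would use that by definition $\gamma_k^{(n)} = \binom{N}{n}\tr_{n+1,\dots,N}(\gamma_k)$; since the partial trace is a contraction on the trace class, this gives $\|\gamma_k^{(n)}\|_{\gS_1(\gH^n)} < \binom{N}{n}\epsilon$ for every $n=0,\dots,N$. Geometric convergence means $\gamma_k^{(n)} \wto_\ast \gamma^{(n)}$ in $\gS_1(\gH^n)$, and the trace norm is lower semicontinuous for this weak-$\ast$ topology (being the dual norm of the operator norm on compact operators), so
\begin{equation*}
\|\gamma^{(n)}\|_{\gS_1(\gH^n)} \leq \binom{N}{n}\,\epsilon, \qquad n=0,\dots,N.
\end{equation*}

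Next, starting from the defining relation
\begin{equation*}
\gamma^{(n)} = G_n + \sum_{m=n+1}^N \binom{m}{n}\tr_{n+1,\dots,m}(G_m),
\end{equation*}
I would solve recursively downward, beginning with $G_N = \gamma^{(N)}$ and substituting back, to obtain an expression of the form $G_n = \sum_{m=n}^N a_{n,m}\, \tr_{n+1,\dots,m}(\gamma^{(m)})$, with combinatorial coefficients $a_{n,m}$ depending only on $N$. Contractivity of the partial trace then yields $\|G_n\|_{\gS_1(\gH^n)} \leq c_n(N)\,\epsilon$, and summing $\|\gamma\|_{\gS_1(\cFN)} = \sum_{n=0}^N \|G_n\|_{\gS_1(\gH^n)}$ gives the claim with $c = \sum_n c_n(N)$.

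I do not foresee a real obstacle. The only mild subtlety is that $\gamma_k$ is not assumed positive (the lemma is meant to be applied to differences of states), so one must work with $\gS_1$-norms throughout rather than with traces; this is precisely why the contractivity of the partial trace on $\gS_1$ and the weak-$\ast$ lower semicontinuity of the trace norm are the appropriate tools.
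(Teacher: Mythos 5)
Your proposal is correct and follows essentially the same route as the paper: bound $\|\gamma^{(n)}\|_{\gS_1(\gH^n)}\leq\binom{N}{n}\epsilon$ by weak-$\ast$ duality (the paper passes to the limit in $\Tr(K\gamma_k^{(n)})=\binom{N}{n}\Tr((K\otimes 1)\gamma_k)$, which is the same as your contractivity-plus-lower-semicontinuity step), then invert the triangular relation between the $G_n$ and the $\gamma^{(m)}$. The only cosmetic difference is that the paper quotes the explicit inversion formula $G_n=\gamma^{(n)}+\sum_{j\geq1}(-1)^j\binom{n+j}{n}\Tr_{n+1,\dots,n+j}\gamma^{(n+j)}$ from \cite[Lemma 1]{Lewin-11}, whereas you re-derive it by back-substitution.
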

\begin{proof}
Since the trace-class is the dual of the compact operators,
we have
\begin{align*}
\norm{\gamma^{(n)}}_{\gS_1(\gH^n)} 
&= \sup_{K \text{compact, } \norm{K}=1} \lim_{k\to \infty} \left\vert \Tr_{\gH^n}\left(K \gamma_k^{(n)}\right)\right\vert  \\
&= {N \choose n}\sup_{\norm{K}=1} \lim_{k\to \infty} \left\vert \Tr_{\gH^N} \left( K\otimes 1_{N-n} \gamma_k \right)\right\vert 
\leq {N \choose n} \eps\,.
\end{align*}
Now $\gamma=\oplus_{n=0}^N G_n$ is completely determined by its reduced density matrices (cf.~\cite[Lemma 1]{Lewin-11}), explicitly
\begin{equation*}
G_n=\gamma^{(n)} + \sum_{j=1}^{N-n}(-1)^j {n+j \choose n} \Tr_{n+1,\dots,n+j} \gamma^{(n+j)}\,,
\end{equation*}
which proves the claim.
\end{proof}
It will often be useful to state the energy bound using a Hamiltonian without interaction. Let
\begin{equation*}
H^0_n:=\sum_{j=1}^n h_j>1
\end{equation*}
and
\begin{equation*}
\gS_{1,H^0}(\gH^n):=\left\lbrace \gamma\in \gS_1(\gH^n): \big\Vert \sqrt{H^0_n} \gamma\sqrt{ H^0_n}\big\Vert_{\gS_1}<\infty \right\rbrace\,.
\end{equation*}
This space has a natural norm given by $\norm{\gamma}:=\norm{\gamma}_{\gS_1} + \Vert\sqrt{H^0_n} \gamma \sqrt{H^0_n}\Vert_{\gS_1}$.
Using $H_n$ instead of $H^0_n$ gives an equivalent norm, because of the inequalities~\eqref{eq:w_rel_form_bd}. We also define
\begin{equation*}
\gS_{1,\mathbb{H}^0}(\cFN):=\left\lbrace \gamma\in \gS_1(\cFN): \big\Vert\sqrt{\mathbb{H}^0} \gamma \sqrt{\mathbb{H}^0}\big\Vert_{\gS_1}<\infty \right\rbrace\,,
\end{equation*}
with $\mathbb{H}^0=\oplus_{n=0}^N H^0_n$, $H^0_0=0$.
It will be important that geometric convergence preserves such energy estimates. The following lemma proves that the unit ball of $\gS_{1,\mathbb{H}^0}(\cFN)$ is closed under geometric convergence. This is the only step in the proof of Theorem~\ref{thm:mbRAGE} for which positivity of $H^0_n$ is essential.
\begin{lemma}\label{lem:energynorm}
Let $\lbrace \gamma_k\rbrace_{k=1}^\infty$ in $\gS_{1,\mathbb{H}^0}(\cFN)$ be a bounded sequence that converges geometrically to $\gamma$. Then for every $n\leq N$
\begin{equation*}
\sqrt{H^0_n} \gamma_k^{(n)} \sqrt{H^0_n} \wto_* \sqrt{H^0_n}\gamma^{(n)} \sqrt{H^0_n}
\end{equation*}
in $\gS_1(\gH^n)$.
\end{lemma}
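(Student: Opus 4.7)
The plan pivots on the bounded operators $R:=(1+H^0_n)^{-1/2}$ and $S:=R\sqrt{H^0_n}=\sqrt{H^0_n/(1+H^0_n)}$, both of norm at most~$1$ by the functional calculus; this is the step where positivity of $H^0_n$ is used in an essential way. First, I would translate the hypothesis $\sup_k \|\sqrt{\mathbb{H}^0}\gamma_k\sqrt{\mathbb{H}^0}\|_{\gS_1}<\infty$ into a uniform bound on $\tr(H^0_n\gamma_k^{(n)})$. Writing $\gamma_k=G_{0,k}\oplus\cdots\oplus G_{N,k}$ and combining the formula $\gamma_k^{(n)}=\sum_{m=n}^N \binom{m}{n}\tr_{n+1,\dots,m}G_{m,k}$ with the (anti)symmetry identity $\tr\bigl((H^0_n\otimes 1_{m-n})G_{m,k}\bigr)=(n/m)\tr(H^0_m G_{m,k})$ yields
$$\tr(H^0_n\gamma_k^{(n)})=\sum_{m=n}^N\binom{m}{n}\frac{n}{m}\tr(H^0_m G_{m,k})\le c_n\,\|\sqrt{\mathbb{H}^0}\gamma_k\sqrt{\mathbb{H}^0}\|_{\gS_1},$$
which is uniformly bounded in $k$ (the case $n=0$ being trivial since $H^0_0=0$).

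\medskip

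Next, I would set $A_k:=\sqrt{H^0_n}\gamma_k^{(n)}\sqrt{H^0_n}\ge 0$; the estimate just obtained gives $\|A_k\|_{\gS_1}=\tr(A_k)\le C$ uniformly. By Banach--Alaoglu, every subsequence admits a further weakly-$\ast$ convergent sub-subsequence $A_{k_j}\wto_\ast A_\infty$ in $\gS_1(\gH^n)$, and the main task is to show that $A_\infty$ does not depend on the extracted subsequence. Since $K\mapsto BKB'$ sends compact operators to compact operators for bounded $B, B'$, weak-$\ast$ convergence in $\gS_1$ is preserved under two-sided multiplication by $R$ or by $S$. Hence the geometric convergence $\gamma_k^{(n)}\wto_\ast\gamma^{(n)}$ gives
$$RA_kR=S\,\gamma_k^{(n)}\,S\;\wto_\ast\;S\,\gamma^{(n)}\,S,$$
whereas $A_{k_j}\wto_\ast A_\infty$ also yields $RA_{k_j}R\wto_\ast RA_\infty R$, forcing the identification $RA_\infty R=S\gamma^{(n)}S$.

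\medskip

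Since $R$ is bounded and injective with dense range, this last equation determines the trace-class operator $A_\infty$ uniquely, and it is natural to denote it by $\sqrt{H^0_n}\gamma^{(n)}\sqrt{H^0_n}$ --- as a byproduct, $\gamma^{(n)}$ belongs to the form domain of $H^0_n$. Uniqueness of the subsequential limit then upgrades the extraction to convergence of the full sequence, $A_k\wto_\ast A_\infty$, which is the claim. The main obstacle is precisely this identification step: a priori $A_\infty$ is only known to be trace-class, and matching it with the formal expression $\sqrt{H^0_n}\gamma^{(n)}\sqrt{H^0_n}$ relies crucially on $R=(1+H^0_n)^{-1/2}$ being a bona fide bounded, injective operator on $\gH^n$. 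This is where the positivity assumption $H^0_n>0$ is indispensable, in agreement with the remark preceding the lemma.
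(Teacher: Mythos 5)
Your overall strategy coincides with the paper's: a uniform $\gS_1$ bound on $A_k=\sqrt{H^0_n}\,\gamma_k^{(n)}\sqrt{H^0_n}$, extraction of a weak-$\ast$ limit by Banach--Alaoglu, identification of that limit using the geometric convergence $\gamma_k^{(n)}\wto_\ast\gamma^{(n)}$, and the subsequence principle. Your identification step (sandwiching with the bounded operators $R=(1+H^0_n)^{-1/2}$ and $S=\sqrt{H^0_n}\,R$, then using injectivity and dense range of $R$) is correct and is essentially the paper's argument in a slightly different dress: the paper tests $A_k$ against $(H^0_n)^{-1/2}K(H^0_n)^{-1/2}$, which is compact because $h>1$ makes $(H^0_n)^{-1/2}$ bounded. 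The passage from subsequential limits to convergence of the whole sequence is also fine.

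The genuine gap is in your first step, the uniform bound. You assume positivity: you write $A_k\ge 0$ and use $\norm{A_k}_{\gS_1}=\tr(A_k)$, and your symmetry identity $\tr\bigl((H^0_n\otimes 1_{m-n})G_{m,k}\bigr)=(n/m)\tr(H^0_m G_{m,k})$ only controls traces, not trace norms. But the lemma is stated for arbitrary bounded sequences in $\gS_{1,\mathbb{H}^0}(\cFN)$, with no positivity hypothesis, and it is invoked later (in the proof of Proposition~\ref{thm:BBGKY}, part (2)) for such general sequences. For a non-positive self-adjoint $G$, $\tr(H^0_m G)$ can vanish while $\Vert\sqrt{H^0_m}\,G\,\sqrt{H^0_m}\Vert_{\gS_1}$ is arbitrarily large (take $G=|u\rangle\langle u|-|v\rangle\langle v|$ with $\Vert\sqrt{H^0_m}u\Vert=\Vert\sqrt{H^0_m}v\Vert$ large), and you cannot reduce to $|\gamma_k|$ because forming reduced density matrices does not commute with taking absolute values; so your estimate, and even the well-definedness of $A_k$ as a trace-class operator, is only justified for $\gamma_k\ge 0$. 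The paper avoids this by a duality bound valid for every trace-class $\eta$: $\Vert\sqrt{H^0_n}\,\eta^{(n)}\sqrt{H^0_n}\Vert_{\gS_1}=\sup_{\norm{B}=1}\bigl|\tr\bigl(B\sqrt{H^0_n}\,\eta^{(n)}\sqrt{H^0_n}\bigr)\bigr|\le\binom{N}{n}\Vert\sqrt{H^0_N}\,\eta\,\sqrt{H^0_N}\Vert_{\gS_1}$, using $0\le (H^0_n\otimes 1)(H^0_N)^{-1}\le 1$; this is where positivity of $H^0$ (rather than of $\gamma_k$) enters. Your proof becomes complete either by adding the hypothesis $\gamma_k\ge 0$ (which suffices for the ergodic means, which are states) or, better, by replacing your trace identity with this duality estimate.
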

\begin{proof}
By linearity, it is sufficient to prove the claim for $\gamma_k\in \gS_{1,H^0}(\gH^N)$.
For any $n\leq N$ we have
\begin{equation*}
0\leq (H^0_n \otimes 1_{\gH^{N-n}} ) (H^0_N)^{-1} =1 - (1_{\gH^{n}}\otimes H^0_{N-n}  ) (H^0_N)^{-1}\leq 1\,,
\end{equation*}
because $H_n^0\geq1$.
Thus, for an arbitrary $\eta \in \gS_{1,H^0}(\gH^N)$,
\begin{align}
 \norm{\sqrt{H^0_n}\eta^{(n)} \sqrt{H^0_n}}_{\gS_1} 
 &= \sup_{B \text{ bounded, }\norm{B}=1} \left\vert \Tr\left(B \sqrt{H^0_n}\eta^{(n)} \sqrt{H^0_n}\right) \right \vert\notag\\
 %
 %
 &\leq {N \choose n} \norm{(\sqrt{H^0_n}\otimes 1) \eta (\sqrt{H^0_n}\otimes1)}_{\gS_1}\notag\\
 &\leq {N \choose n} \norm{\sqrt{H^0_N} \eta \sqrt{H^0_N}}_{\gS_1}\,.\label{eq:energynorm}
\end{align}
So $\sqrt{H^0_n} \gamma_k^{(n)}\sqrt{H^0_n}$ is uniformly bounded in $\gS_1(\gH^n)$ and there exists a subsequence such that
\begin{equation*}
 \sqrt{H^0_n} \gamma_k^{(n)} \sqrt{H^0_n} \wto_* \tilde \gamma^{(n)}\,.
\end{equation*}
Now for every compact operator $K$ on $\gH^n$ we have
\begin{align*}
\Tr\big( (H^0_n)^{-1/2} K (H^0_n)^{-1/2} \tilde \gamma^{(n)} \big)
&=\lim_{k\to \infty} \Tr\big( (H^0_n)^{-1/2} K \gamma_k^{(n)} (H^0_n)^{1/2} \big)\\
&=\Tr\big(K \gamma^{(n)}\big)\,,
\end{align*}
whence $\tilde \gamma^{(n)} = \sqrt{H^0_n} \gamma^{(n)}\sqrt{H^0_n}$. 
\end{proof}
\subsection*{Step 3. }
The reduction of the problem to sequences of bounded energy in step 2 will now allow us to study the sequence $e^{-i s H_N} M(T_k) e^{i s H_N}$ via the differential equation it satisfies. If $\gamma(t)$ is a solution to the von Neumann equation~\eqref{eq:Heisenberg}, its reduced density matrices (formally) satisfy the finite BBGKY hierarchy
\begin{equation*}
 i \frac{d}{dt}\gamma^{(n)}(t)=[H_n,\gamma^{(n)}(t)] +  (n+1)\sum_{j=1}^n\Tr_{n+1}\left([w_{j,n+1}, \gamma^{(n+1)}(t)]\right)\,.
\end{equation*}
Note, however, that the equation has no clear meaning (not even in a weak sense) if $w$ is not bounded, due to the partial trace. If $w$ is a bounded operator $D((H_2^0)^\alpha)\to \gH^2$ for some $\alpha>0$, this problem can be handled by considering only initial conditions satisfying $\Tr((H_N^0)^\alpha\gamma (H_N^0)^\alpha)<E$, but to deal with potentials that are only $H_2^0$-form-bounded we will have to define a modified equation, that is equivalent to the original one for bounded $w$.
\begin{proposition}[Well-posedness of the truncated BBGKY hierarchy]\label{thm:BBGKY}\leavevmode\\
\begin{enumerate}
 \item For every $\gamma\in \gS_{1,\mathbb{H}^0}(\cFN)$ 
 the family of reduced density matrices $\gamma(t)^{(n)}=\left(e^{-i t\mathbb{H}}\gamma e^{i t\mathbb{H}}\right)^{(n)}$ is the unique solution $\lbrace \gamma^{(n)}(t): n=0,\dots, N\rbrace$ to the 
 the system of equations
\begin{align}
 & \gamma^{(n)}(t)
\begin{aligned}[t]\notag
 =&e^{-it H_{n}}\gamma^{(n)}e^{it H_{n}}\\
&-i(n+1) \sum_{j=1}^n \int_0^t e^{-i(t-s) H_{n}}\Tr_{n+1}\left(\mathcal{L}_{jn}\left(w,\gamma^{(n+1)}(s)\right)\right)e^{i(t-s) H_{n}}ds,
\end{aligned}\\
 &\mathcal{L}_{jn}\left(w,\gamma^{(n+1)}\right)=
\left[h_{n+1}^{-1/2}w_{j,n+1}h_{n+1}^{-1/2},h_{n+1}^{1/2}\gamma^{(n+1)}(s)h_{n+1}^{1/2}\right]\,,\label{eq:w-duhamel}
\end{align}
such that
\begin{equation*}
 \gamma^{(n)}(t)\in 
 L^\infty\big( \R, \gS_{1,H^0_n}(\gH^n)\big)\,.
\end{equation*}
\item For every bounded sequence $\gamma_k \in \gS_{1,\mathbb{H}^0}(\cFN)$ that converges geometrically to $\gamma_\infty$, the corresponding solutions $\gamma_k(t)$ of the von Neumann equation~\eqref{eq:Heisenberg} with initial condition $\gamma_k$ converge geometrically to the solution with initial condition $\gamma_\infty$:
 \begin{equation*}
 \forall t\in \R:\quad e^{-i t\mathbb{H}}\gamma_k e^{i t\mathbb{H}}
 \gto  e^{-i t\mathbb{H}}\gamma_\infty e^{i t\mathbb{H}}\,.
\end{equation*}
\end{enumerate}
\end{proposition}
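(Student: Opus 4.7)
The plan is to prove the two items separately: item (1) is an existence and uniqueness result for the renormalised BBGKY hierarchy, which I obtain by a combination of Duhamel's formula and descending induction on $n$, while item (2) is a stability statement obtained by passing to the limit inside the integral equations provided by item (1). Since $\mathbb{H}$ preserves each particle sector, it suffices by linearity to argue block-wise on $\gamma=G_0\oplus\cdots\oplus G_N$, and the key technical identity, following from cyclicity of the partial trace in the last factor, is $\tr_{n+1}(\mathcal{L}_{jn}(w,\gamma^{(n+1)}))=\tr_{n+1}([w_{j,n+1},\gamma^{(n+1)}])$ whenever the right-hand side is meaningful.

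For item (1), I would first verify that $\gamma(t)^{(n)}:=(e^{-it\mathbb{H}}\gamma e^{it\mathbb{H}})^{(n)}$ solves the integral equation. Writing $G_m(t)=e^{-itH_m}G_me^{itH_m}$ and formally differentiating the partial trace, all commutators with $h_i$ or $w_{i\ell}$ that act only on traced-out particles cancel by cyclicity of $\tr_{n+1,\ldots,m}$; after summing the remaining cross-terms over $m\geq n+1$ and using the combinatorial identity $\binom{m}{n}(m-n)/\binom{m}{n+1}=n+1$, one is left with exactly $[H_n,\gamma^{(n)}(t)]$ plus $(n+1)\sum_{j=1}^n\tr_{n+1}([w_{j,n+1},\gamma^{(n+1)}(t)])$. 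This formal partial trace is not a priori defined in $\gS_1(\gH^n)$ when $w$ is only form-bounded, which is precisely why the proposition rewrites the integrand as $\mathcal{L}_{jn}$: the factors $h_{n+1}^{\pm 1/2}$ are arranged so that $h_{n+1}^{1/2}\gamma^{(n+1)}h_{n+1}^{1/2}\in\gS_1(\gH^{n+1})$ (by the energy estimate analogous to~\eqref{eq:energynorm}) is paired with the bounded form $h_{n+1}^{-1/2}w_{j,n+1}h_{n+1}^{-1/2}$. To make the derivation rigorous I would regularise $\gamma$ by the energy cutoffs $\chi_{\{\mathbb{H}^0\leq R\}}\gamma\chi_{\{\mathbb{H}^0\leq R\}}$, for which all manipulations reduce to bounded operators on finite-energy subspaces, and then pass to $R\to\infty$ using the uniform $\gS_{1,\mathbb{H}^0}$ bound and strong continuity of $e^{-it\mathbb{H}}$. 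Uniqueness then follows by descending induction: at $n=N$ the equation is the standard von Neumann equation with unique solution $e^{-itH_N}\gamma^{(N)}e^{itH_N}$, and at each level $n<N$ the Duhamel formula is a linear inhomogeneous equation whose source at level $n+1$ is already uniquely determined, so a Gronwall estimate in $\gS_{1,H_n^0}(\gH^n)$ closes the argument.

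For item (2), first note that the energy bound propagates: by~\eqref{eq:w_rel_form_bd} and conservation of $\tr(\mathbb{H}\,\cdot\,)$ along $e^{-it\mathbb{H}}$, the sequence $\gamma_k(t)$ is uniformly bounded in $\gS_{1,\mathbb{H}^0}(\cFN)$ in both $k$ and $t$. I proceed by descending induction on $n$: at level $n=N$ the evolution is mere conjugation by $e^{\pm itH_N}$, which preserves compactness of the test operator, so weak-$\ast$ convergence of $\gamma_k^{(N)}$ passes through directly. At level $n<N$, I apply item (1) to both sequences, subtract the two Duhamel equations, and test against a compact $K$ on $\gH^n$: the homogeneous term tends to zero by geometric convergence of the initial data. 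For the integrand I use cyclicity to rewrite $\tr_{\gH^n}(K\,\tr_{n+1}\mathcal{L}_{jn}(w,\gamma^{(n+1)}(s)))$ as a trace on $\gH^{n+1}$ of $[K\otimes 1,h_{n+1}^{-1/2}w_{j,n+1}h_{n+1}^{-1/2}]$ paired with $h_{n+1}^{1/2}\gamma^{(n+1)}(s)h_{n+1}^{1/2}$; reducing $K$ to finite sums of elementary tensors by density and invoking hypothesis~\eqref{eq:hyp_w2} on the pair $(j,n+1)$, this commutator, after suitable resolvent bookkeeping, becomes an honest compact operator on $\gH^{n+1}$. Combined with the weak-$\ast$ convergence of $\sqrt{H_{n+1}^0}\gamma_k^{(n+1)}(s)\sqrt{H_{n+1}^0}$ coming from the inductive step and Lemma~\ref{lem:energynorm}, this yields pointwise-in-$s$ convergence of the integrand, and the uniform energy bound together with the form-boundedness of $w$ supplies an integrable dominant, so dominated convergence gives convergence of the whole integral. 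The main obstacle is precisely this last step: because $h_{n+1}^{-1/2}w_{j,n+1}h_{n+1}^{-1/2}$ is in general only a bounded quadratic form and not a bounded operator, hypothesis~\eqref{eq:hyp_w2} must be carefully combined with the tensor structure in $j$ and $n+1$ to manufacture the compactness needed before the inductively convergent density matrices can be tested against it.
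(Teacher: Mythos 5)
Your overall architecture matches the paper's (Duhamel form of the hierarchy, uniqueness by triangularity, descending induction combined with Lemma~\ref{lem:energynorm} and dominated convergence for the stability statement), but the two technically decisive steps are not actually carried out, and the first one, as you propose it, would fail. In part (1), regularising the \emph{initial datum} by the cutoffs $\chi_{\{\mathbb{H}^0\le R\}}\gamma\chi_{\{\mathbb{H}^0\le R\}}$ does not make the derivation rigorous: the spectral projection of $\mathbb{H}^0$ does not commute with $e^{-it\mathbb{H}}$ (only $\mathbb{H}$ does), so the cutoff is destroyed at $t>0$ and the evolved state is again only known to lie in $\gS_{1,\mathbb{H}^0}(\cFN)$, i.e.\ one controls only form-domain regularity; moreover, even on the range of $\chi_{\{\mathbb{H}^0\le R\}}$ the interaction $w$ need not be a bounded --- or even densely defined --- \emph{operator}, since hypothesis~\eqref{eq:hyp_w1} only controls its quadratic form (in the Schr\"odinger example with $w\in L^p$, $\max(1,d/2)<p<2$, one can have $w\psi\notin L^2$ even for smooth $\psi$). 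Hence $\Tr_{n+1}([w_{j,n+1},\gamma^{(n+1)}(t)])$ remains undefined and the claim that ``all manipulations reduce to bounded operators'' does not hold. The missing idea is to regularise the \emph{interaction} instead: with $w^\delta=(\delta H_2^0+1)^{-1/2}w(\delta H_2^0+1)^{-1/2}$, which is genuinely bounded, one derives the hierarchy exactly as in the bounded case, and then passes $\delta\to0$ using the uniform form bounds, strong resolvent convergence of $H_N^0+\sum_{j<\ell}w^\delta_{j\ell}+e_N$ to $H_N+e_N$ on $D\big((H_N^0)^{1/2}\big)$, and the resulting convergence of the unitary groups there, which gives $(U_\delta(t)\gamma U_\delta(t)^*)^{(n)}\to\gamma^{(n)}(t)$ in $\gS_{1,H^0}(\gH^n)$ and allows taking the limit on both sides of~\eqref{eq:w-duhamel}.

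In part (2), the step you call ``suitable resolvent bookkeeping'' is precisely the point that needs a proof. After rewriting the integrand with $\mathcal{L}_{jm}$ and invoking Lemma~\ref{lem:energynorm}, the operator that must be compact is of the form $(\tilde K_j(t-s)\otimes 1)(h_jh_{m+1})^{-1/2}\,w_{j,m+1}\,(H^0_{m+1})^{-1/2}$ (taking $K=\tilde K (H_m)^{-1/2}$ in a dense set of compacts), whereas hypothesis~\eqref{eq:hyp_w2} supplies compactness only with the weight $(h_j+h_{m+1})^{-1/2}$ adjacent to $w$. Bridging the two requires handling the bounded but non-tensor-product factor $(h_jh_{m+1})^{-1/2}(h_j+h_{m+1})^{1/2}$; the paper does this by expanding it, using $h>1$ and the commutativity of $h_j$ and $h_{m+1}$, as a norm-convergent series $\sum_l L_l\otimes B_l$ with $L_l$ compact (after absorbing $\tilde K_j$) and $B_l$ bounded, via the power series of $(1-x)^{-1/2}$. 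Without an argument of this kind, your reduction of $K$ to elementary tensors does not by itself produce the compact test operator, and the dominated-convergence step cannot be closed. The remaining ingredients of your outline --- uniqueness by triangularity (no Gronwall estimate is even needed, since the level-$n$ equation expresses $\gamma^{(n)}(t)$ explicitly in terms of level $n+1$), propagation of the energy bound, and the trivial level-$N$ base case --- agree with the paper's proof.
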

\begin{proof}
 Note that $\gamma^{(N+1)}=0$ and this is a triangular system, which can be solved starting with $\gamma^{(N)}(t)= e^{-it H_{N}}\gamma^{(N)}e^{it H_{N}}$. 
 This immediately gives us uniqueness, for assume we have two solutions  with $\gamma^{(n)}(0)=\gamma^{(n)}=\theta^{(n)}(0)$, then $\gamma^{(N)}(t)=\theta^{(N)}(t)$. If $\gamma^{(N)},\theta^{(N)}\in L^\infty\big( \R, \gS_{1,H^0_n}\big)$ this then implies $\gamma^{(N-1)}(t)=\theta^{(N-1)}(t)$.
 
 We now check that $\gamma(t)^{(n)}=\left(e^{-i t\mathbb{H}}\gamma e^{i t\mathbb{H}}\right)^{(n)}$ is indeed a solution. By linearity, we may restrict to initial conditions in $\gamma\in\gS_{1,H^0}(\gH^N)$. Since both $(H_N+e_N)^{1/2} (H_N^0)^{-1/2}$ and $(H_N^0)^{1/2} (H_N+e_N)^{-1/2}$ are bounded, due to hypothesis~\eqref{eq:hyp_w1}, we have $\gamma(t)=\gamma(t)^{(N)}\in L^\infty\big(\R, \gS_{1,H^0}(\gH^N)\big)$.
 Equation~\eqref{eq:energynorm} then implies that $\gamma(t)^{(n)}\in L^\infty\big(\R, \gS_{1,H^0}(\gH^n)\big)$. 
 
Now assume for a moment that $w$ is bounded and let
\begin{equation*}
W_n:=\sum_{1\leq j \leq n< \ell\leq N} w_{j\ell}
\end{equation*}
be the interaction of the first $n$ particles with the remaining $N-n$.
Then we have
\begin{equation*}
 H_N=H_n\otimes 1_{\gH^{N-n}} + 1_{\gH^{n}}\otimes H_{N-n} + W_n\,,
\end{equation*}
and we can write $\gamma(t)^{(N)}$ using Duhamel's formula,
\begin{align*}
 \gamma(t)^{(N)}=&e^{-it H_{N-n}}e^{-it H_n}\gamma e^{it H_n}e^{it H_{N-n}}\\
 &-i \int_0^t e^{-i(t-s)  H_{N-n}}e^{-i(t-s) H_n}\left[ W_n, \gamma(s)^{(N)}\right]e^{i(t-s) H_{N-n}}e^{i(t-s) H_n}ds
\end{align*}
(with the $1\otimes$ omitted for shortness). This implies that
\begin{align*}
 &\gamma(t)^{(n)}-e^{-it H_n}\gamma^{(n)}e^{it H_n}\\
 &\quad=
-i{N \choose n}\int_0^te^{-i(t-s) H_n}\Tr_{n+1, \dots, N}\left(\left[ W_n, \gamma(s)^{(N)}\right]\right)e^{i(t-s) H_n}ds\\
&\quad=
-i(n+1)\int_0^t\sum_{j=1}^ne^{-i(t-s) H_n}\Tr_{n+1}\left(\left[w_{j,n+1},\gamma(s)^{(n+1)}\right]\right)e^{i(t-s) H_n}ds\,.
\end{align*}
As $h_{n+1}^{1/2}\gamma(s)^{(n+1)}h_{n+1}^{1/2}$ is trace-class and $w$ is bounded, we have
\begin{align*}
\Tr_{n+1}\left(\left[w_{j,n+1},\gamma^{(n+1)}\right]\right)=
\Tr_{n+1}\left(\left[h_{n+1}^{-1/2}w_{j,n+1}h_{n+1}^{-1/2},h_{n+1}^{1/2}\gamma^{(n+1)}h_{n+1}^{1/2}\right]\right)\,,
\end{align*}
which gives~\eqref{eq:w-duhamel}. In order to account for unbounded interactions, let $w^\delta:=(\delta H_2^0+1)^{-1/2}w (\delta H_2^0+1)^{-1/2}$ for $\delta>0$, which is bounded by hypothesis~\eqref{eq:hyp_w1}. We clearly have
\begin{equation*}
 \lim_{\delta\to 0} (H_2^0)^{-1/2}w^\delta (H_2^0)^{-1/2}= (H_2^0)^{-1/2}w (H_2^0)^{-1/2}
\end{equation*}
in the strong topology of operators on $\gH^2$. Hence, the constant $C_\eps$ of equation~\eqref{eq:w_rel_form_bd} may be chosen in such a way that the inequalities hold for all $0<\delta\leq 1$. Then, for an appropriate choice of $e_N\geq0$,
\begin{equation*}
 \Big( H_N^0 + \sum_{1\leq j<\ell\leq N} w^\delta_{j\ell}+e_N \Big)^{-1/2}:\gH^N\to D\big((H_N^0)^{1/2}\big)
\end{equation*}
is bounded, uniformly in $\delta$. This implies that, as operators on $D\big((H_N^0)^{1/2}\big)$,
\begin{equation*}
\lim_{\delta \to 0} \Big( H_N^0 + \sum_{1\leq j<\ell \leq N} w^\delta_{j\ell} + e_N\Big)^{-1}=\left( H_N +e_N\right)^{-1}
\end{equation*}
strongly. This in turn implies that the unitary groups $U_\delta$ generated by these operators also converge in the strong operator topology on $D\big((H_N^0)^{1/2}\big)$. So, for $\gamma\in \gS_{1,H^0}$, we have
\begin{equation*}
 \lim_{\delta \to 0} (H_N^0)^{1/2}U_\delta(t) \gamma = (H_N^0)^{1/2} e^{-it H_N}\gamma 
\end{equation*}
in trace-norm. We thus have, using~\eqref{eq:energynorm}, that
\begin{equation*}
 \lim_{\delta\to 0} \left(U_\delta(t) \gamma U_\delta(t)^*\right)^{(n)}=\gamma^{(n)}(t) 
\end{equation*}
in $\gS_{1,H^0}(\gH^n)$. As $w^\delta$ is bounded, the left hand side solves~\eqref{eq:w-duhamel} with this interaction.
To take the limit on the right hand side of~\eqref{eq:w-duhamel} observe that
\begin{equation*}
 \lim_{\delta\to 0} \Tr_{n+1}\mathcal{L}_{jn}\left((w^\delta, \gamma^{(n+1)})\right)=\Tr_{n+1}\mathcal{L}_{jn}\left((w, \gamma^{(n+1)})\right)
\end{equation*}
in the space of operators 
for which $(H_n^0)^{-1/2}A(H_n^0)^{-1/2}$ is trace-class.
As $\gS_{1,H^0}$ is obviously contained in this space, this shows that $\gamma(t)^{(n)}$ is indeed a solution.

We now prove the continuous dependence on the initial condition, item~\emph{(2)}. Since we have already proved uniqueness, it is sufficient to show that the geometric limit of $\gamma_k(t)=e^{-it \mathbb{H}}\gamma_k e^{it \mathbb{H}}$ is a solution of~\eqref{eq:w-duhamel}.
On the space $\gH^N$ this is trivial, since for any compact operator $K$, $K(t):=e^{itH_N} K e^{-itH_N}$ is also compact and thus
\begin{equation*}
 \lim_{k\to \infty} \Tr\left(K e^{-itH_N}\gamma_k e^{itH_N}\right)= \Tr\left(K e^{-itH_N}\gamma_\infty e^{itH_N}\right)\,,
\end{equation*}
for every $t\in \R$.
Now let $m<N$ and assume that $\gamma_k(t)^{(n)}\wto_* \gamma_\infty(t)^{(n)}$ in $\gS_1(\gH^n)$ for every $t\in \R$ and $n=m+1,\dots  N$.
Then, we have for every compact operator $K$ on $\gH^m$
\begin{align*}
&\frac{i}{m+1}\left( \tr_{\gH^m}\left(K \gamma_k(t)^{(m)}\right)-\Tr_{\gH^m}\left( e^{-itH_m}K\gamma_k(0)^{(m)} e^{itH_m}\right)\right)\\
&=\int_0^t \sum_{j=1}^m\Tr_{\gH^{m+1}}\left((K(t-s)\otimes_{s/a} 1)\mathcal{L}_{jm}(w,\gamma^{(m+1)})
\right)\,ds \,.
\end{align*}
It is clearly sufficient to prove convergence for $K$ in a dense set of compact operators, so we may assume that $K(t-s)=\tilde K(t-s) (H_m)^{-1/2}$ with compact $\tilde K$. For such an operator $K$, the integrand is uniformly bounded, and we will show that it converges pointwise.
Using Lemma~\ref{lem:energynorm}, the induction hypothesis gives us
\begin{equation*}
(H_{m+1}^0)^{1/2}\gamma_k^{(m+1)}(s)(H_{m+1}^0)^{1/2}\wto_*(H_{m+1}^0)^{1/2}\gamma^{(m+1)}_\infty(s)(H_{m+1}^0)^{1/2}\,.
\end{equation*}
It is thus enough to prove that
\begin{equation*}
 (\tilde K_j(t-s)\otimes_{s/a} 1)(h_{m+1}h_j)^{-1/2} w_{j,m+1}(H_{m+1}^0)^{-1/2}\,,
\end{equation*}
 where $\tilde K_j(t-s)= \tilde K(t-s) (H_m^0)^{-1/2} h_j^{1/2}$, is compact. 
Hypothesis~\eqref{eq:hyp_w2} guarantees that 
 \begin{equation*}
 (L\otimes_{s/a} 1)(h_{m+1}+h_j)^{-1/2} w_{j,m+1} (H^0_{m+1})^{-1/2}
 \end{equation*}
is compact for any compact operator $L$. Hence, it is sufficient to show that 
\begin{equation*}
 (\tilde K_j(t-s)\otimes 1)(h_{m+1}h_j)^{-1/2}(h_j+h_{m+1})^{1/2}= \sum_{l=1}^\infty L_l\otimes B_l\,,
\end{equation*}
as a norm-convergent sum, with bounded operators $B_l$ on $\gH$ and compact operators $L_l$ on $\gH^m$. Since $h_j$ and $h_{m+1}$ commute, we have
\begin{equation*}
 \left( (h_{m+1}+h_j)/h_jh_{m+1}\right)^{-1/2}=\left(1+h_j^{-1}\right)^{-1/2}\left(1-\frac{1-h_m^{-1}}{1+h_j^{-1}}\right)^{-1/2}\,,
\end{equation*}
and because $0<(1-h_m^{-1})/(1+h_j^{-1})<1$ we can write this using the convergent power series of $(1-x)^{-1/2}$ on $\vert x\vert <1$, which gives the desired form.

We have thus shown that $\gamma_k(t)^{(m)}$ converges weakly-$\ast$ to the right hand side of equation~\eqref{eq:w-duhamel} with $\gamma^{(m+j)}(t)=\big(e^{-it \mathbb{H}}\gamma_\infty e^{it \mathbb{H}}\big)^{(m+j)}$, which proves the claim by the uniqueness of solutions to~\eqref{eq:w-duhamel}.
\end{proof}
This completes the proof of Theorem~\ref{thm:mbRAGE}. To summarize: 
We established that the Theorem is implied by invariance of the limit state $M_\infty$ under $\mathbb{H}$.
As a first step~\eqref{eq:inv1}, we then showed that this holds if the limit of the time-evolved sequence is the evolved limit state (equation~\eqref{eq:conv time evolv}).
In the second step, we reduced the problem to considering initial conditions with bounded energy, and thus to proving~\eqref{eq:conv time evolv} for sequences of uniformly bounded energy, i.e.~satisfying~\eqref{eq:unif en bound}.
The final step consisted in studying the BBGKY-hierarchy satisfied by the reduced density matrices under the condition of finite energy, in its weak form~\eqref{eq:w-duhamel}. We proved uniqueness of solutions to this equation in Proposition~\ref{thm:BBGKY}. Additionally, we proved that that geometric convergence for the initial conditions, together with a uniform energy bound, implies geometric convergence of the corresponding solutions at any finite time. 
For the sequence $M(T_k)$ of ergodic means this gives
\begin{equation*}
0_{\cF^{\leq N-1}}\oplus e^{-i H_N t} M(T_k)e^{i H_N t} = e^{- i\mathbb{H} t} (0\oplus M(T_k)) e^{i\mathbb{H} t}
\gto e^{-i t\mathbb{H}}M_\infty e^{i t\mathbb{H}}\,,
\end{equation*}
that is~\eqref{eq:conv time evolv}.
As we have shown, this implies invariance of the limit state
\begin{equation*}
 e^{-i t\mathbb{H}}M_\infty e^{i t\mathbb{H}}= \glim\limits_{k\to \infty} e^{-i H_N t} M(T_k)e^{i H_N t} \stackrel{\eqref{eq:inv1}}{=} M_\infty\,,
\end{equation*}
and thus Theorem~\ref{thm:mbRAGE}.
\begin{remark}
 Instead of studying the evolution equation for the density matrices as in step 3, one could also study the dual evolution on the space of operators that we test against. The condition that needs to be verified is that this space is invariant under conjugation by $e^{-it\mathbb{H}}$. 

Geometric convergence of $\gamma_k$ is defined via convergence of $\Tr((K\otimes_{s/a} 1) \gamma_k)$, so the natural space of operators is $K\otimes_{s/a} 1$, where $K$ is compact on $\gH^n$ for some $n\leq N$. However, this space will in general not be invariant under $e^{-it\mathbb{H}}$ (e.g.~ if $w$ is not bounded, the time-derivative is never tangent to this space).
This amounts to the fact that the map $\gamma\mapsto e^{-it\mathbb{H}}\gamma e^{it\mathbb{H}}$ is not continuous with respect to the geometric topology on $\gS_1(\cFN)$.
However, Proposition~\ref{thm:BBGKY} proves that this map is continuous for the induced topology on bounded sets of $\gS_{1,\mathbb{H}^0}(\cFN)$. These sets are closed by Lemma~\ref{lem:energynorm}. 
The dual of this space is spanned by the operators $A\otimes_{s/a} 1$ such that $(H_n^0)^{-1/2}A (H_n^0)^{-1/2}$ is compact, and it can be shown to be invariant by a reasoning similar to that applied in part two of Proposition~\ref{thm:BBGKY}. 
 
This is sufficient to obtain Theorem~\ref{thm:mbRAGE} for arbitrary initial conditions, since the sequence $M(T_k)$ can be uniformly approximated in that space.
\end{remark}

\appendix
\section{Proof of Lemma~\ref{lem:semicomp}}\label{app:proof_lemma_Georgescu}
Recall the notation $V:=\lbrace (x_1, x_2)\in \R^{d}\times\R^{d}: x_1+x_2=0\rbrace$ and parametrize this subspace by the coordinate $v=x_1-x_2$. 
We will begin by showing that
\begin{equation}\label{eq:WD semicomp}
 (1-\Delta_{x_1}-\Delta_{x_2})^{-1/2}w(v)(1-\Delta_{v})^{-1/2}=\sum_{j=1}^\infty K_j \otimes_V B_j\,, 
\end{equation}
with compact $K_j$ and bounded $B_j$.

Let $\xi$ denote the conjugate Fourier-variable to $v$ and $\zeta$ that to $v^\perp:=x_2+x_1$. Let $\eta(\xi,\zeta)=(1+2\xi^2+2\zeta^2)^{-1/2}$, so $(1-\Delta)^{-1/2}$ is just the Fourier-multiplier by $\eta$.
Since $\eta$ tends to zero at infinity, there exist functions $f_j(\xi), g_j(\zeta)$ with compact support, such that $\sum_{j=1}^m f_j(\xi)g_j(\zeta)\stackrel{k\to \infty}{\to}\eta(\xi, \zeta)$ in $L^\infty$. We can additionally arrange to have $\sum_{j=1}^k f_j(\xi)g_j(\zeta)\leq\eta(\xi, \zeta)$ for every $k$.
Now define $B_j:=\mathcal{F}^{-1}_\zeta g_j(\zeta)\mathcal{F}_{v^\perp}$ and $K_j:=\mathcal{F}^{-1}_\xi f_j(\xi)\mathcal{F}_v w(v)(1-\Delta_{v})^{-1/2} $, where $\mathcal{F}_x$ is the Fourier transform in the variable $x$.
These operators clearly have the desired properties, and we now need to check convergence of the sum~\eqref{eq:WD semicomp}, which is not immediately obvious if $w$ is not bounded. 
First, note that $K_0:= \mathcal{F}^{-1}_{\xi}(1+\xi^2)^{-1/2}\mathcal{F}_v w(v)(1-\Delta_{v})^{-1/2}$ is a compact operator on $L^2(V)$ and $f_n \sqrt{1+\xi^2}$ is bounded. 
In order to exploit the fact that $K_0\otimes_V 1$ is compact in the first factor, we identify the operators on $L^2(V)\otimes L^2(V^\perp)$ with operators from $L^2(V)$ to $\mathcal{L}\big(L^2(V^\perp), L^2(\R^{2d})\big)$ via $A(\psi)\varphi:=A \psi\otimes \varphi$.
After this identification, $K_0\otimes_V 1$ defines a compact operator, since for any weakly convergent sequence $\psi_k \wto \psi$ in $L^2(V)$ and $\phi\in L^2(V^\perp)$
\begin{equation*}
 \norm{(K\otimes_V 1)(\psi_k)\varphi - (K\otimes 1)(\psi)\varphi}\leq \norm{K\psi_k -K\psi}_{L^2(V)} \norm{\varphi}_{L^2(V^\perp)}\,.
\end{equation*}
Furthermore, the operators
\begin{equation*}
A_m:=\sum_{j=1}^m \mathcal{F}^{-1} f_j(\xi)\sqrt{1+\xi^2} g_j(\zeta)\mathcal{F}
\end{equation*}
converge to $\mathcal{F}^{-1}\eta(\xi,\zeta)\sqrt{1+\xi^2}\mathcal{F}$ in the strong operator topology of\\ $\mathcal{L}\big(L^2(V),\mathcal{L}(L^2(V^\perp), L^2(\R^{2d}))\big)$, since for every $\psi\in L^2(V)$
\begin{align*}
&\lim_{m\to \infty}\norm{A_m(\psi) - \mathcal{F}^{-1}_{(\xi, \zeta)}\eta(\xi,\zeta)\sqrt{1+\xi^2}\mathcal{F}_{(v, v^\perp)}(\psi)}_{\mathcal{L}(L^2(V^\perp), L^2(\R^{2d}))}\\
&= \lim_{m\to \infty} \sup_{\zeta\in \R^d} \int_{\R^d} \bigg \vert \Big(\sum_{j=1}^m f_j(\xi)g_j(\zeta) -\eta(\xi,\zeta)\Big)\sqrt{1+\xi^2} \hat\psi(\xi)\bigg\vert^2 \,d\xi =0\,,
\end{align*}
as the integrand is bounded by $\vert 2\eta\sqrt{1+\xi^2} \hat\psi \vert^2\in L^1$ and converges to zero, pointwise in $\xi$ and uniformly in $\zeta$.
Thus, since $\sum^m K_j\otimes_V B_j= A_m (K_0\otimes_V 1)$, the sum in~\eqref{eq:WD semicomp} converges in norm.
To deduce the statement of lemma~\ref{lem:semicomp} from~\eqref{eq:WD semicomp} we use use the formula
\begin{align}
 &(h_1+h_2)^{-1/2}-(2e-\Delta_{x_1}-\Delta_{x_2})^{-1/2}\label{eq:comp int}\\
 &=\frac{1}{2\pi i} \int_{\sigma} z^{-1/2}(h_1+h_2-z)^{-1}\left(V(x_1)+ V(x_2)\right)(2e-\Delta_{x_1}-\Delta_{x_2}-z)^{-1}\,dz\notag, 
\end{align}
where $\sigma$ is the boundary of a sector in the right half plane $\mathrm{Re}(z)>0$ with $\sigma\cap \lbrace \mathrm{Im}(z)=0\rbrace =\lbrace e\rbrace$ and the integral converges in the operator norm. The argument we used to prove~\eqref{eq:WD semicomp} also implies that $(1-\Delta_{x_1}-z)^{-1/2}V(x_1)(2e-\Delta_{x_1}-\Delta_{x_2}-z)^{-1/2}$ is $x_1$-semicompact, and thus (cf. Lemma~\ref{cor:Schrodinger}) 
\begin{equation*}
 (1-\Delta_{x_1}-z)^{-1/2}V(x_1)(2e-\Delta_{x_1}-\Delta_{x_2}-z)^{-1}w(v)(1-\Delta_{v})^{-1/2}
\end{equation*}
is a compact operator for every $z\in \sigma$. Now for $f\in L^p(\R^d)$ with $p>\mathrm{max}(1,d/2)$ the Kato-Seiler-Simon inequality gives
\begin{equation*}
 \norm{(1-\Delta_{x}-z)^{-1/2}f(x)(1-\Delta_{x}-z)^{-1/2}}_{\gS_p}\hspace{-8pt}
 \leq C(p,d) (1+\vert z \vert^2)^{ (d/4p-1/2)}\norm{f}_{L^p}.
\end{equation*}
Using the decomposition $V=f_1+f_2$ we see that the integral in~\eqref{eq:comp int} still converges in norm if multiplied by $w(v)(1-\Delta_{v})^{-1/2}$ from the right. Consequently, it defines a compact operator, so the difference between equations~\eqref{eq:W semicomp} and~\eqref{eq:WD semicomp} is compact and the proof is complete.

\bigskip

\noindent\textbf{Acknowledgment.} We thank Laurent Bruneau, Vladimir Georgescu and Phan Th\`anh Nam for useful discussions. We acknowledge financial support from the European Research Council under the European Community's Seventh Framework Programme (FP7/2007-2013 Grant Agreement MNIQS 258023).


\end{document}